\def\input@path{{current}}\makeatother
\def\BibTeX{{\rm B\kern-.05em{\sc i\kern-.025em b}\kern-.08em
    T\kern-.1667em\lower.7ex\hbox{E}\kern-.125emX}}
\newcommand{\blue}[1]{#1} 
\newcommand{\expminus}{e^{-\int_0^t \Qware(p) \, dp }}
\newcommand{\expplus}{e^{\int_0^\tau \Qware(p) \, dp }}
\newcommand{\bintegral}{\Fnominal \int_0^t \expplus \bonware(\tau) \, d \tau}  
\newcommand{\bonwareSuper}{b}
\newcommand{\malwareSuper}{m}
\newcommand{\functionality}{{{F}}}
\newcommand{\activity}{{{A}}}
\newcommand{\effectiveness}{{{E}}}
\newcommand{\unitstep}{u}
\newcommand{\malwareActivity}{{\activity}^{\malwareSuper}(t)}
\newcommand{\bonwareActivity}{{\activity}^{\bonwareSuper}(t)}
\newcommand{\malwareEffectiveness}{{\effectiveness}^{\malwareSuper}(t)}
\newcommand{\bonwareEffectiveness}{{\effectiveness}^{\bonwareSuper}(t)}
\newcommand{\malwareActivityConstant}{{\activity}^{\malwareSuper}}
\newcommand{\bonwareActivityConstant}{{\activity}^{\bonwareSuper}}
\newcommand{\malwareEffectivenessConstant}{{\effectiveness}^{\malwareSuper}}
\newcommand{\bonwareEffectivenessConstant}{{\effectiveness}^{\bonwareSuper}}
\newcommand{\malwareRate}{\theta^{\malwareSuper}}
\newcommand{\bonwareRate}{\theta^{\bonwareSuper}}
\newcommand{\malwareSize}{{{\gamma}^{\malwareSuper}}}
\newcommand{\bonwareSize}{{{\gamma}^{\bonwareSuper}}}
\newcommand\bonware{\mathcal{B}}
\newcommand\malware{\mathcal{M}}
\newcommand\Qware{\mathcal{Q}}
\newcommand\ma{\mathcal{A}} 
\newcommand\Fnominal{\blue{F_\text{N}}{}}
\newcommand\tchange{t^\star}   
\newcommand\bonQone{\frac{\bonware_1}{\Qware_1}}
\DeclareMathOperator{\erf}{erf}
\DeclareMathOperator{\unif}{Unif}
\DeclareMathOperator{\bern}{Bern}
\newcommand{\dunif}[2]{\unif\!\left(#1,#2\right)}
\newcommand{\dbern}[1]{\bern\!\left(#1\right)}
\newcommand\impact{\blue{impact}} 
\newcommand\Impact{\blue{Impact}} %
\newcommand\fff{}
\newcommand\aaa[1]{\color{black}\fff{\includegraphics[width=0.98\columnwidth]{#1}}
\vspace{-1mm}}
\newcommand\ppp[1]{\fff{\includegraphics[trim={0mm 0mm 0mm -3mm}, clip,width=0.98\columnwidth]{#1}}\vspace{-1mm}}
\newcommand\pppp[1]{\fff{\includegraphics[trim={0mm 0mm 0mm -3mm}, clip,width=0.98\columnwidth]{#1}}\vspace{1mm}}
\begin{document}

\graphicspath{ {./figures/} }

\newcommand\combined{ \hspace{-4.5mm} The views and conclusions
  contained in this document are those of the authors and should not
  be interpreted as representing the official policies, either
  expressed or implied, of the Army Research Laboratory or the
  U.S. Government. The U.S. Government is authorized to reproduce and
  distribute reprints for Government purposes notwithstanding any
  copyright notation herein.  This work was partially funded by Cyber
  Technologies, Deputy CTO for Critical Technologies/Applied
  Technology, Office of the Under Secretary of Defense Research and
  Engineering.  Dr. Vandekerckhove's research was sponsored by the
  Army Research Laboratory and was accomplished under Cooperative
  Agreement Number W911NF-21-2-0284.}

\title{Mathematical Modeling of Cyber Resilience\thanks{\combined}}

\author{\IEEEauthorblockN{Alexander Kott, Michael J. Weisman}
\IEEEauthorblockA{
  \textit{U.S. Army Combat Capabilities Development Command}\\
  \textit{Army Research Laboratory} \\
  Adelphi, MD \\
  \{alexander.kott1, michael.j.weisman2\}.civ@army.mil
}
\and
\IEEEauthorblockN{Joachim Vandekerckhove}
\IEEEauthorblockA{\textit{University of California, Irvine} \\
  \textit{Department of Cognitive Sciences}\\
  Irvine, CA \\
  joachim@uci.edu}
}

\maketitle

\begin{abstract}
\label{abstract}
We identify quantitative characteristics of responses to cyber compromises that can be learned from repeatable, systematic experiments. We model a vehicle equipped with an autonomous cyber-defense system and which also has some inherent physical resilience features.  When attacked by malware, this ensemble of cyber-physical features (i.e., “bonware”) strives to resist and recover from the performance degradation caused by the malware’s attack.  We propose parsimonious continuous models, and develop stochastic models to aid in quantifying systems' resilience to cyber attacks.
\end{abstract}

\section{Introduction}

Resilience continues to gain attention as a key property of cyber and cyber-physical systems, for the purposes of cyber defense. Although definitions vary, it is generally agreed that cyber resilience refers to the ability of a system to resist and recover from a cyber compromise that degrades the mission-relevant performance of the system \cite{kott2019cyber}.  Resilience should not be conflated with risk or security \cite{linkov2018risk}.

To make the discussion more concrete, consider the example of a military ground logistics vehicle, possibly unmanned, which performs a mission of delivering heavy supplies along a difficult route. The adversary's malware successfully gains access to the Controller Area Network (CAN bus) of the vehicle \cite{bozdal2018august}. Then, the malware executes cyber attacks by sending a combination of messages intended to degrade the vehicle's performance and diminish its ability to complete its delivery mission. We assume that the malware is at least partly successful, and the vehicle indeed begins to experience a degradation of its mission-relevant performance.

At this point, we expect the vehicle's resilience-relevant elements to  resist the degradation and then to recover its performance to a satisfactory level, within an acceptably short time period. These ``resilience-relevant elements'' might be of several kinds. First, because the vehicle is a cyber-physical system, certain physical characteristics of the vehicles mechanisms will provide a degree of resilience. For example, the cooling system of the vehicle will exhibit a significant resistance to overheating even if the malware succeeds in misrepresenting the temperature sensors data. Second, appropriate defensive software residing on the vehicle continually monitors and analyzes the information passing through the CAN bus \cite{kott2018}.  When the situation appears suspicious, it may take actions such as blocking or correcting potentially malicious messages. Third, it is possible that a remote monitoring center, staffed with experienced human cyber defenders, will detect a cyber compromise and will provide corrective actions remotely \cite{kott2021cyber}.

For the purposes of this paper, we assume that the remote monitoring and resilience via external intervention is impossible \cite{kott2020doers}. This may be the case if the vehicle must maintain radio silence for survivability purposes, or if the malware spoofs or blocks communication channels of the vehicle. Therefore, in this paper we assume that resilience is provided by the first two classes of resilience-relevant elements. Here, by analogy with malware, we call these ``bonware'' -- a combination of physical and cyber features of the vehicle that serve to resist and recover from a cyber compromise.

A key challenge in the field of cyber resilience is quantifying or measuring resilience. Indeed, no engineering discipline achieved significant maturity without being able to measure the properties of phenomena relevant to the discipline \cite{kott2021cyber}. Developers of systems like the notional vehicle in our example must be able to  quantify the resilience of the vehicle under development in order to know whether the features they introduce in the vehicle improve its cyber resilience, or make it worse. 
Similarly, buyers of the vehicle need to know how to quantitatively specify and test resilience in order to determine whether the product meets their specifications.  

In this paper, we report some of the results of a project called \textit{Quantitative Measurement of Cyber Resilience} (QMoCR) in which our research team seeks to identify quantitative characteristics of systems' responses to cyber compromises that can be derived from repeatable, systematic experiments. Briefly, we have constructed a test-bed in which a surrogate vehicle is subjected to controlled cyber attacks produced by malware. The vehicle is equipped with an autonomous cyber-defense system \cite{kott2020doers,kott2018} and also has some inherent physical resilience features. This ensemble of cyber-physical features (i.e., ``bonware'') strives to resist and recover from the performance degradation caused by the malware's attack. The test bed is instrumented in such a way that we can measure observable manifestations of this battle between the malware and bonware, especially the mission-relevant performance parameters of the vehicle. 

The details of the test bed and the experiment are given in a companion paper \cite{ellis2022experimental}. The focus of this paper is different -– here we concentrate on constructing mathematical models that can be used to describe the dynamics of the malware-bonware battle. We seek models that are parsimonious in the number of empirical parameters and allow us to easily derive parameters of the model from experimental data.

The remainder of the paper is organized as follows. In the next section, we briefly describe prior work related to quantification of cyber resilience.  We provide formal definitions of accomplishment and functionality.  We propose a class of parsimonious models in which effects of both malware and bonware are approximated as deterministic, continuous differentiable variables, and we explore several variations of such models. In the following section we propose a different class of models – stochastic models, and we show how this class is related to the previously proposed class of deterministic models. Then we show how these models are used to approximate experimental data obtained with our surrogate vehicle. We show how to determine the parameters of the models from experimental data.  We discuss whether these parameters might be considered quantitative characteristics (i.e., measurements) of the bonware's cyber resilience.    

\section{Prior work}\label{sec:prior-work}

A growing body of literature explores quantification of resilience in general and cyber resilience in particular. Very approximately, the literature can be divided into two categories: 
(1) qualitative assessments of a system (actually existing or its design) by subject matter experts (SMEs) \cite{alexeev}, \cite{henshel} and 
(2) quantitative measurements based on empirical or experimental observations of how a system (or its high-fidelity model; \cite{kott2017assessing}) responds to a cyber compromise \cite{kott2019cyber,ligo2021how}.
In the first category, a well-cited example is the approach called the cyber-resilience matrix \cite{linkov2013resilience}. In this approach, a system is considered as spanning four domains: (1) physical (i.e., the physical resources of the system, and the design, capabilities, features and characteristics of those resources); (2) informational (i.e., the system's availability, storage, and use of information); (3) cognitive (i.e., the ways in which informational and physical resources are used to comprehend the situation and make pertinent decisions); and (4) social (i.e., structure, relations, and communications of social nature within and around the system). For each of these domains of the system, SMEs are asked to assess, and to express in metrics, the extent to which the system exhibits the ability to (1) plan and prepare for an adverse cyber incident; (2) absorb the impact of the adverse cyber incident; (3) recover from the effects of the adverse cyber incident; and (4) adapt to the ramifications of the adverse cyber incident. In this way, the approach defines a 4-by-4 matrix that serves as a framework for structured assessments by SMEs.     

Another example within the same category (i.e., qualitative assessments of a system by SMEs) is a recent, elaborate approach proposed by \cite{beling2021developmental}. The approach is called Framework for Operational Resilience in Engineering and System Test (FOREST), and a key methodology within FOREST is called Testable Resilience Efficacy Elements (TREE). For a given system or subsystem, the methodology requires SMEs to assess, among others, how well the resilience solution is able to (1) sense or discover a successful cyber-attack; (2) identify the part of the system that has been successfully attacked; (3) reconfigure the system in order to mitigate and contain the consequences of the attack. Assessment may include tests of the system, although the methodology does not prescribe the tests. 

Undoubtedly, such methodologies can be very valuable in finding opportunities in improvements of cyber-resilience in a system that is either at the design stage or is already constructed. Still, these are essentially qualitative assessments, not quantitative measurements derived from an experiment. 

In the second category (i.e., quantitative measurements based on empirical or experimental observations of how a system, or its high-fidelity model, responds to a cyber compromise), most approaches tend to revolve around a common idea we call here the area under the curve (AUC) method \cite{hosseini2016review,kott2021to}.
In an experiment/test, a system is engaged into a performance of a representative mission, and then is subjected to an ensemble or sequence of representative cyber attacks. A mission-relevant quantitative functionality of the system is observed and recorded. The resulting average functionality, divided by normal functionality, can be used as a measure of resilience.  



However, AUC-based resilience measures are inherently cumulative, aggregate measures, and do not tell us much about the underlying processes. For example, is it possible to quantify the resilience \impact{} of the bonware of the given system? Similarly, is it possible to quantify the \impact{} of malware? In addition, is it possible to gain insights into how these values of \impact{fulness} vary over time during an incident? We offer steps toward answering such questions.  

\section{Quantitative measures of cyber resilience}\label{sec:qmocr}


Every mission has a goal, and we postulate that for a given mission, there exists a function $\ma(t)$ that represents accomplishment and is cumulative from the mission start time up until the present time $t$.  We define functionality, $\functionality(t)$, to be the time derivative of mission accomplishment.  Thus,
\begin{equation}
  \label{eq:ma}
  \functionality(t) = \frac{d \ma }{dt}, \quad \ma(t) = \int_{t_0}^t \functionality(\tau) \, d\tau.
\end{equation}
The normal functionality, when the system performs normally and does not experience effects of a cyber attack, may, in general, vary with time. For simplicity, throughout this paper, we assume the normal functionality to be constant in time, $\Fnominal(t)=\Fnominal$. Thus, the functionality of our system prior to an attack or other malfunction is normal at the start time: 
$\functionality(t_0)=\Fnominal$.

  In the following sections we develop models for the behavior of a system's functionality over the course of a mission during which it is being attacked by malware and defended by bonware.

In the first set of models, we assume that there is an observable, sufficiently smooth function representing mission accomplishment, and we define functionality to be its time derivative.  Then, we motivate a parsimonious model for the differential equation governing functionality, give the general solution, and discuss a few specific cases.  We  then develop a stochastic model and show its relationship to the continuous model.  Finally, we show, both analytically and empirically, that with sufficiently many instantiations, the average of an ensemble of stochastic curves will approximate the solution to the continuous model differential equation.

\section{Continuous model}\label{sec:continuous}

For the first set of models, we make the assumption that mission accomplishment is twice continuously differentiable: $\ma\in C^2$, and thus $\functionality \in C^1$. 

As a first approximation, we let the \impact{} of malware on the
derivative of functionality be linear. 
The \impact{} of bonware is similarly defined and proportional to the
level of functionality below normal.
Malware degrades the system while bonware aims to increase
functionality over time.  Malware \impact{} and bonware \impact{} are
assumed to be continuous functions of time,
$\malware, \bonware \in C^0.$ The \impact{} on functionality is the sum
of the \impact{s} of malware and bonware, and 
\begin{equation}
  \frac{d\functionality}{dt} + \Qware(t) \functionality(t) = \Fnominal \bonware (t),
\label{eq:00}
\end{equation}
where $\Qware(t)=\malware(t)+\bonware(t).$
Since we expect bonware to help (or at least not harm) and malware to
not help, we assume $\bonware(t) \ge 0$ and $\malware(t) \ge 0$.  We also assume normal
functionality is positive, $\functionality_0 > 0,$ and functionality
is always nonnegative and less than or equal to normal functionality,
$0 \le \functionality(t) \le \Fnominal.$ 
This first-order linear differential equation has the following solution: 
\small
\begin{equation}
  F(t)  = \expminus \left( F(0) + \bintegral \right).
\end{equation}
\normalsize
To help us understand how the model works, we find explicit solutions for a number of examples.  

  \subsection{Constant model}

Assuming $\malware, \bonware,$ and $\Qware$ are constant, we have
\begin{equation}  \label{eq:1}
   \frac{d\functionality}{dt} + \Qware \functionality(t) =  \Fnominal \bonware.
\end{equation}

\subsubsection{No bonware}

If $\bonware=0$, then Equation~\ref{eq:1} reduces to $\frac{d\functionality}{dt} + \malware \functionality(t) =  0$ and $\functionality(t) = \functionality(0) e^{-\malware t}$. If also $\malware=0$ (no bonware and no malware), then $\frac{d\functionality}{dt}=0$ and $\functionality(t)=\functionality(0)$.

\subsubsection{Bonware}\label{sec:4.2}

With bonware present, 
the solution is 
\begin{equation} \label{eq:3}
    \functionality(t) = \left[\functionality(0) - \frac{\Fnominal \bonware}{ \Qware } \right] e^{-\Qware t} + \frac{\Fnominal \bonware}{\Qware}.
\end{equation}

\begin{figure}[th]
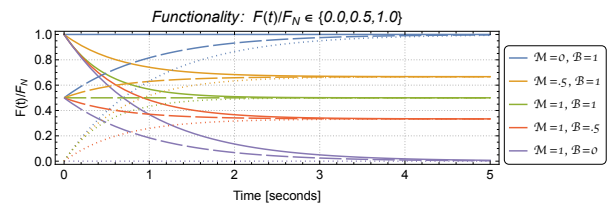

  \centering
  \aaa{figure_2_rescaled.pdf}
    \caption{Normalized functionality, $\functionality(t)/\Fnominal$, is shown for various values of $\malware$ (malware attacking) and $\bonware$ (bonware defending) and at initial conditions $\functionality(0)/\Fnominal \in\{0.0, 0.5, 1.0 \}$.  The functionality over time depends on the relative strengths of bonware and malware and on the intial condition.  When the system initially is at normal functionality and when malware overpowers bonware, functionality exhibits exponential decay.  When functionality initially is low, and when bonware overpowers malware, the system recovers (via Eq.~\ref{eq:3}) to $\frac{\Fnominal \bonware}{\malware+\bonware}$.}
    \label{fig:2}
\end{figure}

If $\functionality(0)>\sfrac{\Fnominal\bonware}{\Qware},$ then $\functionality(t)$ will initially, at time $t=0$, be at $\functionality(0)$ and decrease to $\sfrac{\Fnominal\bonware}{\Qware}$.  If  $\functionality(0)>\sfrac{\Fnominal\bonware}{\Qware},$ then the function $\functionality(t) = \functionality(0)$ will be constant.  If $\functionality(0)<\sfrac{\Fnominal \bonware}{\Qware},$ the function will start at $\functionality=\functionality(0)$ and increase to $\sfrac{\Fnominal \bonware}{\Qware}.$ Examples of these situations are shown in Figure~\ref{fig:2}.  The plot of $(\malware>0)$ in Figure~\ref{fig:2} shows that even in the presence of bonware, malware will still have an impact on the system.  \blue{The steady-state of the system is obtained either by setting $\frac{d\functionality}{dt}=0$ in Equation~\ref{eq:1} or letting $t\to\infty:$}
\begin{equation} \label{eq:5}
  \functionality_\infty=  \lim_{t\to\infty} \functionality(t)  =\Fnominal\frac{\bonware}{\malware+\bonware}
\end{equation}
so that the antidote to malware is to overwhelm it with bonware.  \blue{The exponent, $-\Qware t = (-\malware-\bonware)t$ in the solution given by Equation $\ref{eq:3}$ indicates that increasing the \impact{} of either malware or bonware will cause the system to more quickly approach steady-state.}{}  At steady-state,
\begin{equation}\label{eq:5a}
  \begin{aligned}
    \frac{ \Fnominal -\functionality_\infty}{\functionality_\infty}  =  \frac{\malware}{\malware+\bonware}.
  \end{aligned}
\end{equation}
Equation~\ref{eq:5a} gives us further insight into the trade-off between impacts of both malware and bonware.  The relative decrease of the function from normal functionality is equal to the ratio of malware \impact{} to the sum of malware and bonware \impact{}s.
  \subsection{Piecewise constant model}

If either malware's or bonware's impact diminishes at some point in the incident, the model may switch from one set of constants defining malware and bonware to another set of constants.  The differential equation (Eq.~\ref{eq:00}) may now be expressed as
\begin{equation}
  \frac{d\functionality}{dt} = \sum_{j=0}^{N-1}(\Fnominal-\functionality(t)) \bonware_j(t) -  \functionality(t) \malware_j (t),
  \label{eq:000}
\end{equation}
where the vectors $\boldsymbol{\malware} =( {\malware}_0,
{\malware}_1, \cdots {\malware}_{N-1} )$ and $\boldsymbol{\bonware} =(
{\bonware}_0,  {\bonware}_1,\cdots, {\bonware}_{N-1})$ contain the
malware \impact{} and bonware \impact{}s within time windows whose end
points are defined by  $\{t_0, t_1, \cdots, t_N \}$. The solution will
be a function which, in each time interval, is the solution found in
Equation~\ref{eq:3}.  The purple curve in Figure \ref{fig:notional} is
a realization of this model.

\begin{figure}[t]
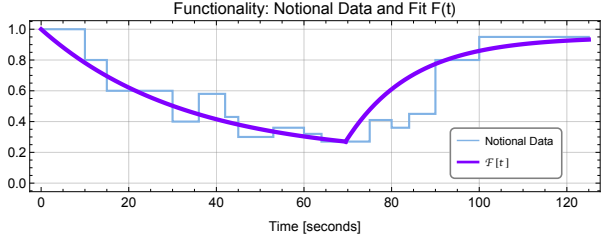
 
  \centering
  \aaa{curves_rescaled.pdf}
  \caption{The smooth line is an example functionality curve with piecewise constant malware and bonware impacts.  The notional data and piecewise constant model fit are described below in Section \ref{sec:parameters}.}
  \label{fig:notional}
\end{figure}

  \subsection{Linear model}  \label{sec:linear}

\newcommand{\tempa}{\Omega(t)}
\newcommand{\tempb}{\Lambda}

The \impact{}s of malware and bonware may also be linear functions of $t$, so that $\malware(t) = \nu - \mu t$,    $\bonware(t) = \alpha-\beta t$, and  $\Qware(t)   =  \lambda - \omega t$, where $\lambda = \alpha + \nu$ and $\omega  = \beta + \mu$.  Under this linear model, Equation~\ref{eq:00} becomes
\begin{equation}
  \frac{d\functionality}{dt} + (\lambda - \omega t) \functionality(t) = \Fnominal (\alpha - \beta t)
\label{eq:linear:model}
\end{equation}
The solution can be expressed in terms of the error function
$\blue{\erf(z)=\frac{2}{\sqrt{\pi }}\int_0^z e^{-\tau^2}\,d\tau}$:
\begin{equation}
  \label{eq:6}
  \begin{aligned}
    \frac{\functionality(t)}{\Fnominal} &= \frac{1}{\tempa} \left\{
      \frac{\functionality(0)}{\Fnominal}
      -\frac{\beta}{\omega}\left(1-{\tempa}\right) +(\alpha \omega
      -\beta \lambda )
    \right.\\
    \times&\left.  \frac{\sqrt{\frac{\pi }{2}} e^{\tempb^2} }{\omega
        ^{3/2}}\left[\erf\left(\tempb\right)+\erf\left(\frac{\omega
            t}{\sqrt{2 \omega }}-\tempb\right)\right] \right\}
\end{aligned}
\end{equation}
\normalsize
where $\tempa = e^{\lambda  t-\frac{1}{2}\omega t^2}$, and $\tempb = \sfrac{\lambda}{\sqrt{2 \omega }}$. 

  \subsection{Piecewise linear model}

\newcommand{\tcj}{\Omega_j(t)}
\newcommand{\td}{\Lambda}

Both malware and bonware \impact{}s may initially be linear, but if
the situation changes and a different linear model holds after a time,
the model should be able to account for it.  In particular, if malware
\impact{} is decreasing over time, at some point we will reach
$\malware=0$ and the model switches to a new linear model.
Equation~\ref{eq:linear:model} can be written
\begin{equation*}
  \frac{d\functionality}{dt} = \sum_{j=0}^{N-1}  \left[ (\lambda_j - \omega_j t) \functionality(t) -  \Fnominal (\alpha_j - \beta_j t) \right]. \label{eq:piecewise:linear}
\end{equation*}
The solution follows from Equation~\ref{eq:6}. 
Example realizations of the piecewise linear models are shown in
Figure~\ref{fig:piecewise:linear}.  The shapes of the curves resemble experimental data discussed in \cite{ellis2022experimental}.

\begin{figure}[t]
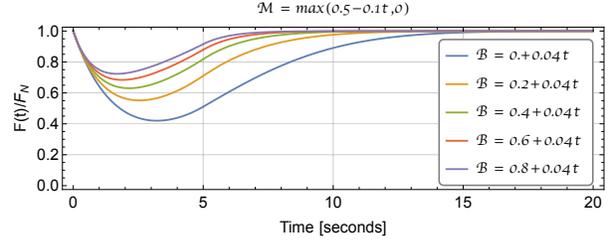
 
  \centering
    \aaa{figure_3_simplified.pdf}
    \caption{Normalized functionality, $\functionality(t)/\Fnominal$,
      for piecewise linear models. \Impact{}s of malware and bonware
      are linear functions of time.}
    \label{fig:piecewise:linear}
\end{figure}

\section{Stochastic differential equation model}

In this section, we extend the previously defined differential equation (DE) model to a \textit{stochastic} differential equation (SDE) model.  The extension is motivated by the discontinuous nature of the notional data in Figure~\ref{fig:notional}.  Whereas the DE model assumed a smooth functionality curve, our stochastic version allows for a more punctuated attack-and-restoration pattern.

In the SDE model, both malware and bonware may be active at random (or a priori unknown) times with random (or a priori unknown) effectiveness. Let malware activity $\malwareActivity \in \{0,1\}$ indicate whether malware was successful at time $t$, bonware activity $\bonwareActivity \in \{0,1\}$ indicate whether bonware was successful at time $t$, malware effectiveness $\malwareEffectiveness \in [0,1)$ express the proportion of functionality reduced by the malware's success at time $t$, and bonware effectiveness $\bonwareEffectiveness \in [0,1)$ express the proportion of damage undone by the bonware's success at time $t$. The SDE analog to Equation~\ref{eq:00} is then
\begin{equation}\label{eq:sde:generic}
  \frac{d \functionality}{d t} 
  =  \left(\Fnominal-\functionality(t)\right) \bonwareActivity \bonwareEffectiveness - \functionality(t) \malwareActivity \malwareEffectiveness.   
\end{equation}

Rather than changing deterministically over time, these model parameters are assumed to vary stochastically according to these distributions: 
\begin{eqnarray}
  \malwareActivity      &\sim& \dbern{\malwareRate(t)}\label{eq:sde:mwa}, \\
  \bonwareActivity      &\sim& \dbern{\bonwareRate(t)}\label{eq:sde:bwa}, \\
  \malwareEffectiveness &\sim& \dunif{0}{\malwareSize(t)}\label{eq:sde:mwe},\\
  \bonwareEffectiveness &\sim& \dunif{0}{\bonwareSize(t)}\label{eq:sde:bwe},
\end{eqnarray}
where $\dbern{\theta}$ indicates the Bernoulli distribution with rate $\theta$ and $\dunif{0}{\gamma}$ indicates a uniform distribution with lower bound $0$ and upper bound $\gamma$.  Hence, $\malwareRate(t) \in {[0,1]}$ is the probability that malware is successful at time $t$, $\bonwareRate(t) \in {[0,1]}$ is the probability that bonware is successful at time $t$, $\malwareSize(t) \in {(0,1]}$ is the maximum fraction of damage inflicted by malware, and $\bonwareSize(t) \in {(0,1]}$ is the maximum fraction of damage undone by bonware.

Like the ordinary differential equation (ODE) model, the SDE model allows for a number of interesting variants.  In the remainder of this section, we introduce some useful simplifications and extensions.

  \subsection{Constant parameters}

In the simplest version of the SDE model, we assume that the rate of malware attacks and their maximum efficiency are constant in time, and that the rate of bonware restoration and its maximum efficiency are both constant in time. Specifically, under this model it is assumed that $\malwareRate(t) = \theta^{\malwareSuper}$, $\bonwareRate(t) = \theta^{\bonwareSuper}$, $\malwareSize(t) = \gamma^{\malwareSuper}$, and $\bonwareSize(t) = \gamma^{\bonwareSuper}$ (cf.\ Eqs.~\ref{eq:sde:mwa}--\ref{eq:sde:bwe}).

This version of the model is parsimonious, with only four free parameters, each with a useful interpretation.



  \subsection{Piecewise constant parameters}

In a first extension of the SDE model, we assume that malware and bonware have an activity rate of 0 until they activate at time points $t^{\malwareSuper}$ and $t^{\bonwareSuper}$, respectively. After activation, both activity rates are constant.  Additionally, both efficiency parameters are assumed to be constant in time. Specifically,  $\malwareRate(t) = \theta^{\malwareSuper} \unitstep\!\left(t-t^{\malwareSuper}\right)$, $\bonwareRate(t) = \theta^{\bonwareSuper} \unitstep\!\left(t-t^{\bonwareSuper}\right)$, $\malwareSize(t) = \gamma^{\malwareSuper}$, and $\bonwareSize(t) = \gamma^{\bonwareSuper}$.
Here $\unitstep(\cdot)$ is the unit step function.

The piecewise constant formulation here is particularly apposite for our experiments, in which the timing of malware attacks is known.  The analyst can then choose between considering $t^{\malwareSuper}$ and $t^{\bonwareSuper}$ as known, and retain the parsimony of a four-parameter model, or considering them unknown and estimate them from data in a six-parameter model.  The difference between the known onset time of an attack and the estimated $t^m$ may then be interpreted as the time it takes for an attack to take effect. Similarly, the difference between that onset time and the estimated $t^b$ may be interpreted as the delay until bonware begins to restore functionality after an attack.

\subsection{Parameter expansion of the SDE model}

The SDE model can be conveniently stated as a hidden Markov model and implemented as a directed acyclic graph \cite{griffiths2008bayesian} for efficient parameter estimation with a general-purpose Bayesian inference engine (e.g., JAGS; \cite{plummer2003jags}).  Our implementation relied on a sequential definition for the likelihood function:
$\left(\functionality(t+1) \mid \functionality(t), \ldots\right) \sim  \dunif{L(t)}{U(t)}$, with $L(t) = \functionality(t)-\malwareActivity\malwareEffectiveness\functionality(t)$ and $U(t) = \functionality(t)+\bonwareActivity\bonwareEffectiveness\left(1-\functionality(t)\right)$.

Since this likelihood function depends on the unknown stochastic parameters $\malwareActivity$, $\bonwareActivity$, $\malwareEffectiveness$, and $\bonwareEffectiveness$, we applied a parameter expansion approach \cite{liu1998parameter,gelman2004parameterization} using Equations~\ref{eq:sde:mwa}--\ref{eq:sde:bwe}. 

\subsection{Relationship between continuous and SDE model}\label{sec:proposition}

With the parameters of the stochastic model selected appropriately, we show that as the number of stochastic realizations increases, the expectation of the solution to the stochastic differential equation model approaches that of the ODE model.   We show this for the simple constant parameter case.  The general result follows by extension.

\newtheorem*{theorem}{Theorem}

\begin{theorem}
Let $y^m_k \sim \dbern{2\malware}$, $y^b_k \sim
\dbern{2\bonware}$, $z^m_{k} \sim \dunif{0}{\functionality_{k}}$, $z^b_{k} \sim \dunif{0}{\Fnominal-\functionality_{k}}$, and 
\begin{equation}  \label{eq:stochastic:diff}
  \functionality_{k+1}  =  \functionality_{k} 
     - y^m_{k} z^m_{k}
     + y^b_{k} z^b_{k}, \quad (k=1,\hdots,K). 
\end{equation}
Let ${\mathcal{F}_k}_n=\frac{{\functionality_k}_j}{n}$, $(j=1,\hdots,n)$, then\\
\blue{$\mathcal{F}_k~=~\mathbb{E} (\functionality_k)~=~\lim_{n\to\infty}
 {\mathcal{F}_k}_n$ and $\mathcal{F}_k\approx \functionality(k),$ for
 large $k$, 
 where $\functionality(t)$ is the solution to the initial value problem
given by Equation~\ref{eq:1} with $F(0)=\mathcal{F}_0.$}
\end{theorem}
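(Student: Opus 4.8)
\medskip
\noindent\textbf{Proof proposal.}
The plan is to split the statement into two essentially independent pieces. The first is a strong law of large numbers identifying the ensemble average ${\mathcal{F}_k}_n=\frac1n\sum_{j=1}^n{\functionality_k}_j$ with the expectation $\mathcal{F}_k=\mathbb{E}(\functionality_k)$. The second is a direct expectation computation showing that $\mathcal{F}_k$ satisfies the forward-Euler recursion (unit step) for the initial value problem of Equation~\ref{eq:1}, whose closed-form solution agrees with $\functionality(k)$ as $k$ grows.

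First I would verify that the chain stays in $[0,\Fnominal]$. By induction on $k$: if $\functionality_k\in[0,\Fnominal]$ then, almost surely, $z^m_k\in[0,\functionality_k]$ and $z^b_k\in[0,\Fnominal-\functionality_k]$, so $\functionality_{k+1}=\functionality_k-y^m_kz^m_k+y^b_kz^b_k$ satisfies $\functionality_{k+1}\ge\functionality_k-z^m_k\ge0$ and $\functionality_{k+1}\le\functionality_k+z^b_k\le\Fnominal$. Hence every $\functionality_k$ is bounded — in particular integrable and uniformly integrable — and the uniform laws defining the update are well posed (collapsing harmlessly to a point mass at $0$ when $\functionality_k\in\{0,\Fnominal\}$). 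Fixing $k$ and letting $n\to\infty$ over independent instantiations ${\functionality_k}_1,\dots,{\functionality_k}_n$ of the chain, the classical strong law of large numbers gives ${\mathcal{F}_k}_n\to\mathbb{E}(\functionality_k)=:\mathcal{F}_k$ almost surely (and in $L^1$), which is the first assertion.

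Next I would propagate the expectation through Equation~\ref{eq:stochastic:diff}. Assuming the natural conditional independence of $y^m_k,z^m_k,y^b_k,z^b_k$ given $\functionality_k$, and using $\mathbb{E}(y^m_k)=2\malware$, $\mathbb{E}(y^b_k)=2\bonware$, $\mathbb{E}(z^m_k\mid\functionality_k)=\functionality_k/2$ and $\mathbb{E}(z^b_k\mid\functionality_k)=(\Fnominal-\functionality_k)/2$, the law of total expectation yields
\begin{equation*}
  \mathcal{F}_{k+1}=\mathcal{F}_k-2\malware\cdot\tfrac{\mathcal{F}_k}{2}+2\bonware\cdot\tfrac{\Fnominal-\mathcal{F}_k}{2}=(1-\Qware)\,\mathcal{F}_k+\Fnominal\bonware ,
\end{equation*}
which is precisely the unit-step forward-Euler scheme for $\functionality'=-\Qware\functionality+\Fnominal\bonware$ (Equation~\ref{eq:1}). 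Solving this linear recursion explicitly gives $\mathcal{F}_k=(1-\Qware)^k\big(\mathcal{F}_0-\tfrac{\Fnominal\bonware}{\Qware}\big)+\tfrac{\Fnominal\bonware}{\Qware}$, whereas Equation~\ref{eq:3} gives $\functionality(k)=e^{-\Qware k}\big(\mathcal{F}_0-\tfrac{\Fnominal\bonware}{\Qware}\big)+\tfrac{\Fnominal\bonware}{\Qware}$. Since the Bernoulli parameters force $2\malware\le1$ and $2\bonware\le1$, so that $\Qware=\malware+\bonware\in(0,1]$ (the degenerate case $\Qware=0$ being trivial), both $(1-\Qware)^k$ and $e^{-\Qware k}$ vanish as $k\to\infty$; therefore $\mathcal{F}_k-\functionality(k)=\big[(1-\Qware)^k-e^{-\Qware k}\big]\big(\mathcal{F}_0-\tfrac{\Fnominal\bonware}{\Qware}\big)\to0$, i.e.\ $\mathcal{F}_k$ and $\functionality(k)$ both converge to the common steady state $\Fnominal\bonware/\Qware$ of Equation~\ref{eq:5} and so agree for large $k$. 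The extension to the piecewise-constant and time-varying variants then follows by applying this argument on each sub-interval.

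The hard part will be interpreting the ``$\approx$'': the expected chain is exactly unit-step Euler for Equation~\ref{eq:1}, and Euler with a fixed step does not reproduce the ODE solution identically, so the transients $(1-\Qware)^k$ and $e^{-\Qware k}$ genuinely differ — the honest statement is the asymptotic agreement above via the shared fixed point. If a quantitative bound valid for all $k$ were desired, I would estimate $\big|(1-\Qware)^k-e^{-\Qware k}\big|$ using elementary bounds on $\log(1-\Qware)$, or instead refine the discretization to step $1/M$ and invoke standard forward-Euler convergence as $M\to\infty$. Everything else is the two standard ingredients — the strong law and an explicitly solvable linear recursion — assembled as above, with the only real care needed being the bookkeeping of conditional expectations in the step where the randomness of $\functionality_k$ enters.
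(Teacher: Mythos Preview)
Your proposal is correct and follows essentially the same route as the paper: take expectations in the stochastic recursion to obtain $\mathcal{F}_{k}-\mathcal{F}_{k-1}+\Qware\,\mathcal{F}_{k-1}=\Fnominal\bonware$, solve this linear recurrence to get $\mathcal{F}_k=\big(\mathcal{F}_0-\tfrac{\Fnominal\bonware}{\Qware}\big)(1-\Qware)^k+\tfrac{\Fnominal\bonware}{\Qware}$, and observe that this approximates Equation~\ref{eq:3} for large $k$. Your version is simply more careful, adding the boundedness check, the explicit strong-law justification for the ensemble limit, and a clean reading of the ``$\approx$'' via the common fixed point --- all of which the paper leaves implicit.
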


\begin{proof}
\blue{Take the expectation of Equation~\ref{eq:stochastic:diff}.  Then
  $\mathcal{F}_k~-~\mathcal{F}_{k-1}+ (\malware+\bonware)
  \mathcal{F}_{k-1}=\Fnominal \bonware$.  With
  ${\mathcal{F}_0}_n=\mathcal{F}_0,$
 the~solution~is 
$\mathcal{F}_k = \left[\mathcal{F}_0 - \frac{\Fnominal
    \bonware}{ \Qware } \right] (1-\Qware)^k + \frac{\Fnominal
  \bonware}{\Qware},$ which approximates Equation~\ref{eq:3} for large
$k$.}
\end{proof}

\section{An application of the model}
\subsection{Obtaining model parameters}
\label{sec:parameters}
%
%
Given notional data that represents a typical curve of functionality
over the course of an incident where malware and bonware are active,
we develop a fast method to estimate the continuous model parameters
for a curve that approximates the data, and use these parameters to
generate further realizations based on this model.  In Figure
\ref{fig:notional} an example of such notional data is plotted (in
light blue).  In this section, we illustrate our method to
extract the model parameters from this curve.



The set $P=\{t_0, \hdots, t_K\}$ partitions the mission timeline and
malware and bonware are constant in each interval $(t_{i-1},t_i), \,\,
i=1, \hdots, K. \,\,$  In each interval,
$\Qware_i=\malware_i+\bonware_i$ and the differential equation
governing Continuous Model I is $\frac{dF(t)}{dt}+\Qware_i F(t) =
\Fnominal (t) \bonware_i.$  Thus, in each interval $(t_{i-1},t_i),$
the solution is 
\begin{equation*}
  \functionality(t)     =
  \left[\functionality(t_{i-1}) - \frac{\Fnominal \bonware_i}{
      \Qware_i } \right]  e^{-\Qware_i (t-t_{i-1})} + \frac{\Fnominal \bonware_i}{\Qware_i}.
\end{equation*}
We compute the effectiveness, $\malwareEffectivenessConstant$, and activity, $\malwareActivityConstant$, of malware
and of bonware ($\bonwareEffectivenessConstant, \bonwareActivityConstant$), in each interval:
$\malware_i = \malwareEffectivenessConstant_i \malwareActivityConstant_i$, $\bonware_i = \bonwareEffectivenessConstant_i \bonwareActivityConstant_i$.

We observe that there is a unique switching time $\tchange$ where the
functionality's trend reverses, and thus we take $K=2.$ Before the
switch, the impact of malware is greater than that of bonware. From
the time of the switch until the end of the mission, bonware is
stronger.  To estimate the switching time $\tchange$, we find the
minimum of the data to occur over the interval from 64 s to 75 s.
There, the minimum value of the data curve is $m=0.27$.  Taking the
midpoint, our estimate for $\tchange$ is 69.5 s.  We estimate the
activity of malware before switching to be the number of times the
data curve decreases divided by the switching time.  Similarly, our
estimate of bonware activity is the number of times the data curve
increases prior to the switching time.  We thus have
$\malwareActivityConstant_1~\approx~\sfrac{7}{69.5}~\approx 0.101$ and
$\bonwareActivityConstant_1 \approx \sfrac{2}{69.5} \approx 0.014$.

To determine the remaining parameters, we numerically solve this
system of equations:
\begin{align*}
    \alpha m  &= \Fnominal \bonQone,\\
    m  &= \functionality(0)  -\Fnominal \bonQone e^{-\Qware_1 \tchange}+\Fnominal \bonQone.
\end{align*}
The first equation says that where the curve meets the minimum of the data, it has experienced exponential decay of $\alpha$ toward the asymptotic minimum.  We take $\alpha$ to be $\alpha=1-\sfrac{1}{e}$. The second equation says that the minimum occurs at the switching time (the time when the model switches from malware dominating bonware, to bonware dominating malware.  Solving this system of equations yields (with $\malware_1=\Qware_1-\bonware_1$), $\malware_1 \approx  0.025$ and $\bonware_1 \approx 0.005$, so that $\malwareEffectivenessConstant_1 = \sfrac{\malware_1}{\malwareActivityConstant_1} \approx 0.503$ and $\bonwareEffectivenessConstant_1 = \sfrac{\bonware_1}{\bonwareActivityConstant_1}\approx 0.362$.

To the right of $t^\star$, we fit an exponentially increasing function.  Similar to before the switching time, we compute the activities of the malware and bonware: $\malwareActivityConstant_2 \approx \frac{1}{100-69.5} \approx 0.033$ and $\bonwareActivityConstant_2 \approx \frac{4}{100-69.5} \approx 0.131$.

\newcommand\FBonQtwo[1]{\frac{#1 \bonware_2}{\Qware_2}}
To determine the remaining parameters, we numerically solve this system of equations:
  \begin{align*}
    \zeta &= \FBonQtwo{F(0)},\\
    \tilde{\alpha} \zeta &= \left(m- \FBonQtwo{\Fnominal}\right)
    \!\! \left(e^{-\Qware_2 (125-t^\star)}+\FBonQtwo{\Fnominal}\right).
  \end{align*}
  
We have found that $\tilde{\alpha}=1-e^{-4}$ and $\zeta=0.95$ are
satisfactory values to use for these hyperparameters.

We compute $\malware_2 \approx 0.005$ and $\bonware_2 \approx 0.088$,
so that
$\malwareEffectivenessConstant_1 =
\sfrac{\malware_1}{\malwareActivityConstant_1} \approx 0.201$ and
$\bonwareEffectivenessConstant_1 =
\sfrac{\bonware_1}{\bonwareActivityConstant_1}\approx 0.957$.

\subsection{Generating stochastic realizations}

Using the parameters found in Section~\ref{sec:parameters}, we can now
generate stochastic realizations.  In Section~\ref{sec:proposition},
we showed that for large sample sizes, the average of our ensemble
will approach the solution to the continuous ODE model.  We show
empirically that this is indeed the case.  To illustrate, we generated
five realizations (Figure \ref{fig:five:realizations}) of the
stochastic model with parameters found from the notional data of
Section~\ref{sec:parameters}.  
By averaging $n$ curves when
$n \in \{5,50,500,5000\}$ we see how the ensemble average approaches
the solution of the corresponding differential equation as predicted
in the Theorem of Section~\ref{sec:proposition}.


\begin{figure}[t]
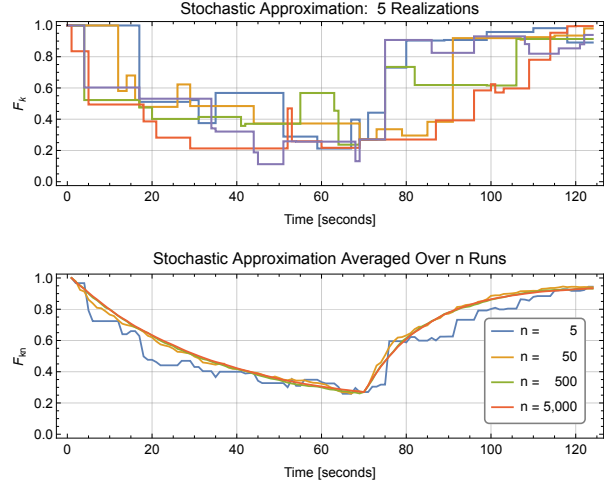

\begin{center}
  \pppp{five_realizations_reduced.pdf}
  \ppp{stochastic_averaging_reduced.pdf}
 \caption{ (Top) Five realizations of the stochastic model generated with the
   parameters obtained by fitting the notional data shown in Figure
   \ref{fig:notional}.  Each realization is different but each roughly follows
   an exponential decay when $t<t^\star=69.5$ s and an
   exponential recovery for $t\ge t^\star$. (Bottom) Averages of $n$ stochastic runs for $n\in \{5,50,500,5000\}.$  As $n$ increases, the average of the ensemble approaches the fitted curve as predicted in the theorem of Section \ref{sec:proposition}.}
   \label{fig:five:realizations}
   \label{fig:average}
 \end{center}
 \end{figure}

\section{Discussion and conclusion}

We have presented a broadly applicable framework for the analysis of the cyber resilience of military artifacts.  Our framework relies on the construction of a custom differential equation time series model that shows good qualitative correspondence to the functionality of vehicles performing missions.  Seeking to move beyond the use of area-under-the-curve quantifications of cyber resilience, our proposed models have the advantage that their parameters have domain-relevant interpretations such as \textit{the activity of malware} and \textit{the effectiveness of bonware}.  Such interpretable parameters can provide a more nuanced interpretation of cyber resilience data 
being experimentally obtained in our lab.

Our formal models come in two families with complementary advantages. A series of continuous models is parsimonious, mathematically convenient, and easy to fit.  A series of discrete, stochastic models shows greater verisimilitude but is slightly less parsimonious and requires more computationally onerous parameter estimation techniques.

Both types of models can be extended to a large variety of custom circumstances, including the case where model parameters change gradually, abruptly, or predictably as a result of experimental manipulation.  Future work will include an extension to the cases of multiple simultaneous objectives and to the case of multiple vehicles to be analyzed jointly.

\bibliographystyle{unsrtnat}
\bibliography{references.bib}

\end{document}